\newtheorem{definition}{Definition}
\newtheorem{theorem}{Theorem}
\newtheorem{lemma}{Lemma}
\newtheorem{corollary}{Corollary}
\thanks{
PhD student invited in the research team CIDre from December 2011 to November 2012.
}
\thanks[cn1]{
School of Computer Science and Engineering, Southeast University, Nanjing, China.
}
\thanks[cn2]{
Key Lab of Computer Network \& Information Integration, Ministry of Education.
}
\thanks{
INRIA Rennes Bretagne Atlantique, EPI CIDre, Rennes, France.
}
\begin{document}
\makeRR   % cas d'un rapport de recherche
%% \makeRT % cas d'un rapport technique.
%%

%%
%% THE PAPER
%%

\section{Introduction}
We consider a distributed system where nodes are mobile and form an ad hoc network characterized by a dynamic topology. When a node changes its physical location by moving around, it also changes the set of its neighbors with whom it can communicate directly (roughly speaking, nodes that are physically nearby). The system is unreliable. Nodes may suffer from Byzantine faults and messages may be lost. A Byzantine node, also called a malicious node, may stop its activity or execute arbitrary code. In particular, it may send messages with fake values. Nodes that are not malicious are said to be correct.

Consensus is recognized as a basic paradigm for fault-tolerance in distributed systems. According to the application's needs, several variants of the  consensus problem have been proposed. Among these agreement abstractions, one is called the {\it Approximate consensus} problem and has been presented for the first time in~\cite{ex1}. Each node begins to participate by providing a real value called its initial value. Eventually all correct nodes must obtain final values  that are different from each other within a maximum value denoted $\epsilon$ (convergence property) and must be in the range of initial values proposed by the correct nodes (validity property). Approximate consensus can be used in applications (clock synchronization, distributed data fusion,$\ldots$) that do not require to achieve exact agreement on a single outcome value.

Several protocols have been proposed to solve this problem in the presence of Byzantine nodes.
Some protocols~\cite{ex1,ex6} assume that the network is fully connected: during the whole execution, a correct node should be able to communicate by message passing with any other correct node. Obviously, this property is not satisfied in our context. Other protocols~\cite{ex2,ex3,ex4} consider partially connected networks but require an additional constraint: any correct node must know the whole topology.  Again, such a global information is impossible to obtain in our context. Based on the linear iterative consensus strategy~\cite{ex16}, recent protocols~\cite{ex7,ex13,ex15} also assume that the network is partially connected but do not require any global information. At each iteration, a correct node broadcasts its value, gathers values from its neighborhood and updates its own value. Its new value is an average
of its own previous value and those of some of its neighbors. Like in~\cite{ex1}, before computing its new value, a correct node must ignore some of the values it has collected. These removed values may have been proposed by Byzantine nodes and may invalidate the validity property. In order to achieve convergence, the proposed solutions rely on additional conditions that have to be satisfied by the topology. In~\cite{ex7,ex15}, the proposed conditions are proved to be sufficient and necessary in the case of an arbitrary directed graph.

The solution presented in this paper addresses the approximate Byzantine consensus problem  in Partially-Connected Mobile Networks. It follows the general strategy proposed in~\cite{ex7,ex13,ex15}. However, it differs from these previous works for two main reasons.  First we modify the iterative protocol to cope more efficiently with mobility. Each node still follows an iteration scheme and repeatedly executes rounds. Yet a round is now decomposed into two parts: a moving step followed by a computing step. Furthermore, during the computing step, a node still broadcasts its value, gathers values and updates its values but now the values used to compute  its new value have not necessarily been received during the current round. In other words, a correct node can now take into account values contained in messages sent during consecutive rounds. An integer parameter (denoted $R_{c}$ hereafter) is used to fix the maximal number of rounds
during which values can be gathered and stored while waiting to be used.  Thanks to this flexibility, a node can use its ability to travel to collect enough values. The second difference is the most important one. While the solutions proposed in~\cite{ex7,ex13,ex15} define conditions that refer only to the topology, we present a condition that considers also the values proposed by correct nodes. To understand the interest of our approach, let us consider the following example.
One correct node $p_{i}$ proposes an initial value $v_{a}$ while all the other correct nodes propose an initial value $v_{b}$. In this particular scenario, if the node $p_{i}$ can receive values from a sufficient number of correct neighbors, approximate consensus can be reached even if all the other nodes are isolated and receive no message. This example suggests that the location of values is just as important as the network topology. In~\cite{ex7,ex15}, constraints on the topology ensure that each node has enough neighbors. These constraints are "universal" because they affect all nodes in the network.  In a mobile environment, it is difficult to ensure that no node is never isolated from (or insufficiently connected to) the rest of the network. Furthermore, the above example shows that a strong universal constraint is not always necessary. In this paper, a novel sufficient and necessary condition is proposed. In this condition, topology and values proposed by correct nodes are both taken into account. The condition affects only a subset of nodes that can change from one round to another. More precisely, the condition focuses only on the correct nodes that propose either the maximum or the minimum value and imposes no obligation on the other nodes. To achieve consensus,  from time to time, at least one of these particular nodes must receive enough messages (quantity requirement)  with values different from its current value (quality requirement). Obviously, the constraint  is weaker and not universal as it has to be satisfied by a single node.

The rest of this paper is organized as follows. Section~\ref{sec:model} introduces the model and provides a formal definition of the approximate consensus problem. In Section~\ref{sec:protocol},  we present our protocol based on linear iteration and we prove that correct nodes will never violate the validity property by adopting illegal values. Section~\ref{sec:relatedworks} sketches out some related works. To ensure convergence, Section~\ref{sec:condition} proposes a sufficient condition. Then this condition is slightly modified to obtain a sufficient and necessary version.  Section~\ref{sec:conclusion} brings our concluding remarks.

\section{Model and Problem Definition }
\label{sec:model}

\subsection{Model}
We consider a mobile distributed system composed of $n$ nodes $V=\{p_{1}, p_{2},...,p_{n}\}$. During the entire period of computation, each node $p_{i}$ can move towards any direction and at any speed within a limited geographical area. Nodes communicate with each other only by exchanging messages. A node can only communicate with its close neighbors. Therefore the topology ({\it i.e.}, the communication graph) is dynamic. When receiving a message, the receiver knows the correct identity of the sender. The communication is synchronous. Messages can be lost but there are no duplicate messages and each channel is FIFO.

Nodes are divided into two subsets denoted $C_{n}$ and $F_{n}$. The set $C_{n}$ contains the correct nodes which always follow the protocol's specification. The nodes of the set $F_{n}$ are Byzantine nodes. They behave arbitrarily and can collude together. In particular, each of them can stop its computation or send messages with different fake values to different neighbors. No assumption restricts their possible behaviors. However, the total number of Byzantine nodes is limited by $f$.

The protocol described in Section~\ref{sec:protocol} is based on an iterative process. A sequence of rounds is carried out by each node. A round is identified by a round number $r$ that belongs to the set $R=\{1,2...\}$. For simplicity, the schedulers of all correct nodes are assumed to be fully synchronous. Each round $r$ is divided into two parts denoted $r_{m}$ (mobility part) and $r_{p}$ (protocol part). During $r_{m}$, a node can either move to a new location or stay in the same place.  During $r_{p}$, a node $p_{i}$ broadcasts its value $v$, gathers values, and updates its state: a consensus protocol (such as the one proposed in Section 3) describes the computation performed by $p_{i}$ during a round.

\begin{figure}[ht]
    \centering
    \includegraphics[width=2.5in]{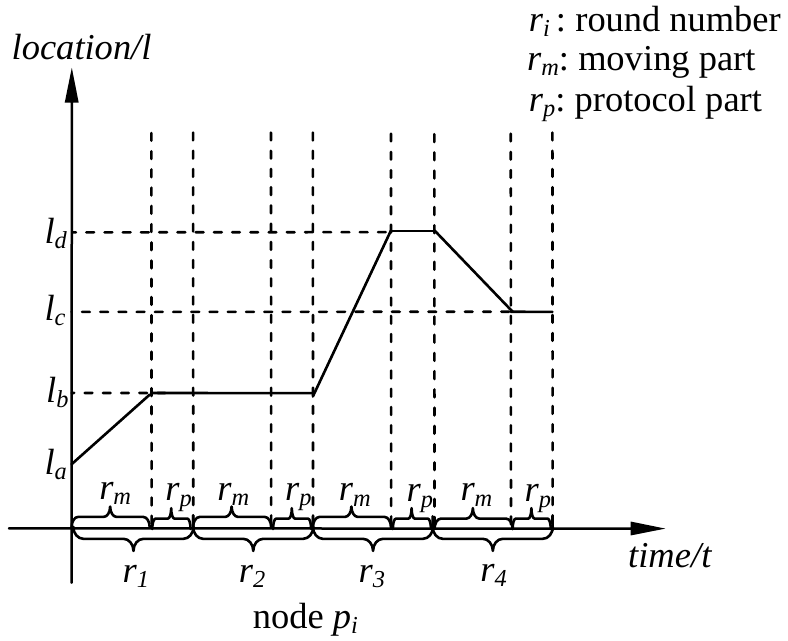}
    \caption{Path followed by node $p_{i}$ during 4 rounds}
    \label{Fig1}
\end{figure}

The behavior of a correct node $p_{i}$ during 4 consecutive rounds is described in Figure~\ref{Fig1}.
Just before executing round $r_{1}$, $p_{i}$ is located in $l_{a}$. During the first part of round $r_{1}$, $p_{i}$ moves to another location $l_{b}$ and executes the protocol. The node $p_{i}$ remains in location $l_{b}$ during round $r_{2}$ and executes again the protocol. It moves to $l_{d}$ during round $r_{3}$ and executes the protocol. In round $r_{4}$, $p_{i}$ moves to location $l_{c}$ and it executes the protocol for the fourth time.

During a round $r$, a {\it simple directed graph} $G_{r}(V, E_{r})$ is used to model the dynamic topology. If during round $r$, node $p_{i}$ can receive a message from a node $p_{j}$ located in its neighborhood then there is a directed link from $p_{j}$ to $p_{i}$: $(p_{j},p_{i})\in E_{r}$. In the proposed protocol, a correct node $p_{i}$ can receive a value during round $r$, keep it during several consecutive rounds, and use it in a future round $r+k$. Therefore, the concept of {\it joint graph}~\cite{ex5} is also used within this paper. A joint graph is defined as the union of the graphs corresponding to several well-identified consecutive rounds.  Figure~\ref{Fig2} illustrates this concept in the particular case of two consecutive rounds $r_{1}$ and $r_{2}$. The graphs $G_{r_{1}}(V, E_{r_{1}})$ and $G_{r_{2}}(V, E_{r_{2}})$ are depicted on the left side. The corresponding  joint graph appears on the right side.

\begin{figure}[ht]
    \centering
    \includegraphics[width=3.0in]{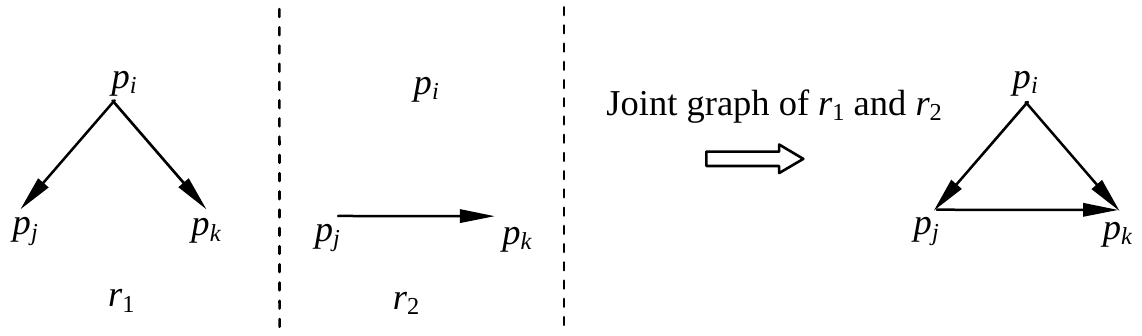}
    \caption{Two graphs and the associated joint graph}
    \label{Fig2}
\end{figure}

\subsection{Definition of the Agreement Problem}

Within this paper, the value of a correct node $p_{i}$ at the beginning of round $r$ is denoted $v_{i}(r)$. Consequently the initial value of $p_{i}$ is denoted $v_{i}(1)$. The minimum (respectively maximum) value proposed by correct nodes during round $r$ is denoted $v_{min}(r)$ (respectively $v_{max}(r)$).

\begin{definition}
{\rm The approximate Byzantine consensus problem is formally defined by two properties:}\\
\noindent
Validity property:\\
{\rm During any round $r$, the value of a correct node is in
the range of initial values of correct nodes:}\\
$\forall p_{i} \in C_{n}$, $\forall r \geq 1$, $v_{i}(r)\in [v_{min}(1), v_{max}(1)]$\\
\noindent
Convergence property: \\
{\rm Eventually all correct nodes have values which are different from each other within a maximum predefined value denoted $\epsilon$ and such that $\epsilon > 0$.}\\
$\forall p_{i},p_{j}  \in C_{n}$, $\exists N > 0$,  $\forall r > N$,  $\mid v_{i}(r) - v_{j}(r) \mid < \epsilon$
\end{definition}

\section{The Protocol and its Safety Proof}
\label{sec:protocol}

\begin{algorithm}[!t]
    \small
    \caption{Linear Approximate Byzantine Consensus}
    \begin{algorithmic}[1]
        \STATE $r$ $\leftarrow$ 1;
        \STATE $v_{i}(1) \leftarrow$ the initial value proposed by $p_{i}$;
        \STATE $Neb_{i}(1)\leftarrow null$;
        \\ $~$
        \STATE \textbf{for any node $p_{i}$ in round $r$};
        \STATE \textbf{do};
        \STATE $p_{i}$ sends $v_{i}$ to its neighbors;
        \STATE $p_{i}$ waits for receiving messages;
        \STATE $Neb_{i}(r)\leftarrow$ \{new values from neighbors\}$\cup Neb_{i}(r)$;
        \STATE $Neb_{i}(r)\leftarrow$sort($Neb_{i}(r)$);
        \STATE $x \leftarrow$ the number of values bigger or equal than $v_{i}(r)$;
        \STATE $y \leftarrow$ the number of values less or equal than $v_{i}(r)$;
        \IF {($x\geq f+1$ or $y\geq f+1$)}
            \STATE $Neb_{i}(r)\leftarrow$reducing($Neb_{i}(r)$, $f$, $x$, $y$);
            \STATE $v_{i}(r+1)\leftarrow$ average($Neb_{i}(r), v_{i}(r)$);
            \STATE $Neb_{i}(r+1) \leftarrow$ $null$;
        \ELSE
            \STATE $v_{i}(r+1)\leftarrow v_{i}(r)$;
            \IF{($r$ mod $R_{c}$ equals to 0)}
                \STATE $Neb_{i}(r+1) \leftarrow$ $null$;
            \ELSE
                \STATE $Neb_{i}(r+1)\leftarrow Neb_{i}(r)$;
            \ENDIF
        \ENDIF
        \STATE $r\leftarrow r+1$;
        \STATE \textbf{enddo};
        \\ $~$
        \STATE Procedure reducing($Neb_{i}(r)$), $f$, $x$, $y$);
        \STATE \textbf{do};
        \STATE $B\leftarrow$ the set of $f$ largest values in $Neb_{i}(r)$;
        \STATE $S\leftarrow$ the set of $f$ smallest values in $Neb_{i}(r)$;
        \IF {$x > y$}
            \STATE Suppress all the values of $B$;
            \STATE Suppress the values $v_{j}\in S$ such that $v_{j}<v_{i}(r)$;
        \ELSE
            \STATE Suppress all the values of $S$;
            \STATE Suppress the values $v_{j}\in B$ such that $v_{j}>v_{i}(r)$;
        \ENDIF
        \STATE \textbf{enddo};
        \\ $~$
        \STATE Procedure average($Neb_{i}(r), v_{i}(r)$)
        \STATE \textbf{do};
        \STATE $n_{i}\leftarrow \mid Neb_{i}(r)\mid$;
        \STATE $v_{new}\leftarrow \frac{v_{i}(r)+\sum_{j}v_{j}}{n_{i}+1}$, $v_{j}\in Neb_{i}(r)$;
        \STATE return $v_{new}$;
        \STATE \textbf{enddo};
    \end{algorithmic}
\end{algorithm}

\subsection{An Iterative Protocol}

The pseudo-code (See Algorithm 1) is executed by all the correct nodes  during the second part of each round $r \geq 1$.

The execution of the three first lines initializes the three main variables managed by a node $p_{i}$: its current round number $r$, its current value $v_{i}(r)$ and a multi-set $Neb_{i}$ which is used to store values received from neighbors. From time to time, $Neb_{i}$ is reset to \emph{null}, in accordance with a strategy explained later. The rest of the code is divided into two main stages called {\it gathering} (line 6-11) and {\it updating} (line 12-24).

Node $p_{i}$ and its neighbors exchange their values (line 6-7). The received values are logged into the multi-set $Neb_{i}$ (line 8). A received value can be kept in $Neb_{i}$ during at most $R_{c}$ rounds (See the test at line 18). During a round $r$, $p_{i}$ receives at most one value from each (correct or Byzantine) node. But if $Neb_{i}$ has not been reset for several rounds, $p_{i}$ can receive a value $v_{j}(r)$ during round $r$ while a value $v_{j}(r-k)$ previously provided by the same node $p_j$ is already in $Neb_{i}$. In that case, $p_{i}$ keeps only the most recent value.  When $p_{i}$ stops collecting values, all values of $Neb_{i}$ are sorted into ascending order (line 9).

To guarantee the validity property, a correct node $p_{i}$ must gather enough values to be allowed to compute a new value (line 14). Otherwise, $p_{i}$ has to start the next round with the same value (line 17). During round $r$, the test evaluated by $p_{i}$ at line 12 defines two favorable cases: either $p_{i}$ has received values that are greater than or equal to $v_{i}(r)$ from at least $f+1$ different nodes, or $p_{i}$ has received values that are smaller than or equal to $v_{i}(r)$ from at least $f+1$ different nodes. If $p_i$ has received less that $f+1$ values from different nodes, the test cannot be satisfied. If $p_{i}$ has received values from at least $2f+1$ different nodes, the test is necessarily satisfied.  If $p_{i}$ has gathered more than $f$ values but less than $2f+1$ values, the test can be satisfied or not.

\begin{figure}[hb]
    \centering
    \includegraphics[width=2.8in]{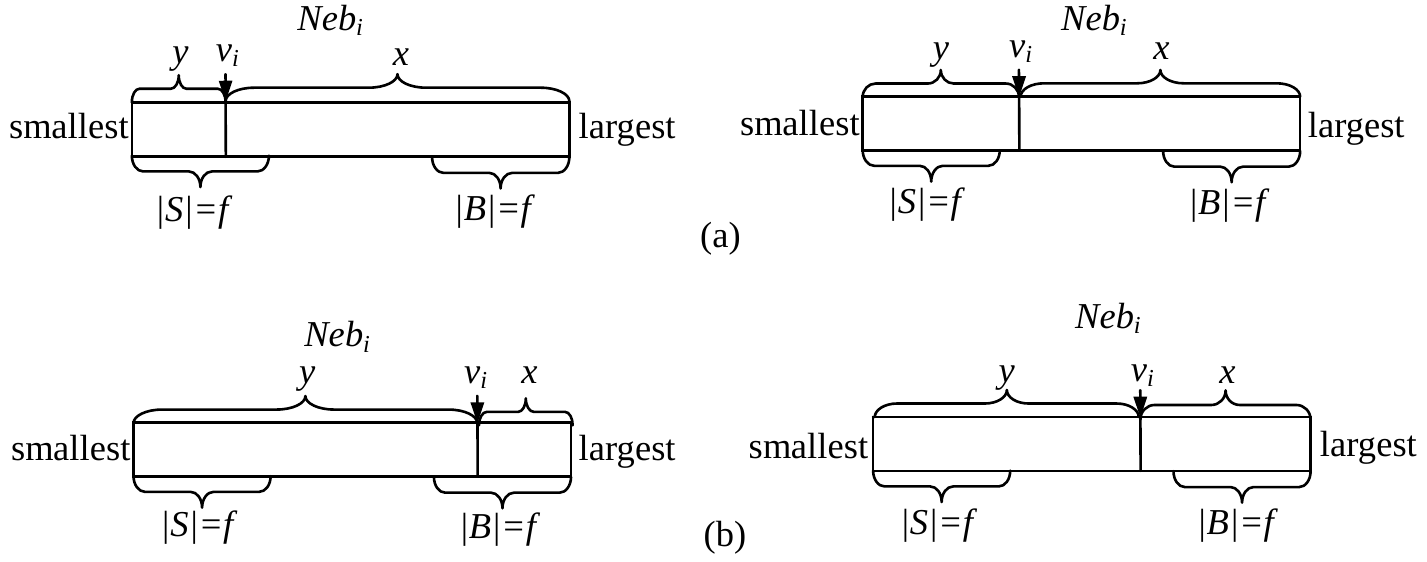}
    \caption{The reducing procedure}
    \label{Fig3}
\end{figure}

When the test of line 12 is satisfied, $p_{i}$ executes sequentially the {\it reducing} procedure (line 13) and the {\it average} procedure (line 14). Reducing operation has been introduced in~\cite{ex1}. To ensure the validity property, a few values have to be removed from the multi-set $Neb_{i}(r)$. The strategy used in this paper leads to suppress between $f$ and $2f$ values while the strategy used in~\cite{ex1} leads to always ignore exactly $2f$ values.  To choose the removed values, $p_{i}$ compare the received ones with its own current value $v_{i}$. Within the set of values $Neb_{i}(r)$, $B$ is defined as the subset that contains the $f$ largest values (line 28) while $S$ is defined as the subset that contains all the $f$ smallest values (line 29). Due to the fact that the test of line 12 is satisfied, either at least $f$ values in $Neb_{i}$ are greater than or equal to $v_{i}$ (case a) or at least $f$ values in $Neb_{i}$ are smaller than or equal to $v_{i}$ (case b). The two cases (a and b) are depicted in Figure~\ref{Fig3} where the sorted set $Neb_{i}$ is represented by a rectangle. Note that the two cases are not mutually exclusive. Thus, the two representations of $Neb_{i}(r)$ that appear on the right side of Figure~\ref{Fig3} are equivalent and may lead to suppress $2f$ values ({\it i.e.} all the values of $B$ and $S$) if $v_{i}$ belongs neither to $B$ nor $S$.  Less values will be removed if we consider the two representations on the left side. In case a, only the values of $B$ and the values $v_{j}$ of $S$ such that $v_{j} < v_{i}(r)$ are suppressed from $Neb_{i}$. In case b, only the values of $S$ and the values $v_{j}$ of $B$ such that $v_{j} > v_{i}(r)$ are removed.

After reducing, $p_{i}$ executes the average procedure.  $p_{i}$ considers only the remaining values of $Neb_{i}(r)$. It calculates average with the values in $Neb_{i}(r)$ and $v_{i}(r)$. The weight is simply set to $\mid Neb_{i}(r)\mid+1$ and $v_{i}(r+1)$ is assigned to the computing result. Then $Neb_{i}(r+1)$ is reset to $null$ (line 15).

\subsection{Resetting the Log of Values $Neb_{i}$}

The variable $Neb_{i}$ is initialized to {\it null} (line 3) and can be reset to {\it null} in two different cases (line 15 and 19). As indicated in the previous paragraph, all the gathered values are suppressed when the node $p_{i}$ computes a new value during the round r (line 15). Therefore, in the future, this node will only use values issued during a round higher than $r$.

In a mobile environment, a node is sometimes isolated or at least weakly connected to the rest of the network. The number of values collected during a given round is sometimes low. If a reset is made systematically at the end of each round, the test of line 12 (used to control if enough values have been gathered) is rarely satisfied. By reducing the frequency of reset operations, a node can collect more values over several consecutive rounds. Thus it may take advantage of mobility to increase the number of discovered neighbors. Consequently, the probability that it can frequently calculate a new value increases. Yet the reset operation is very important and is a key element in the proof of the convergence property. A periodic reset operation cleans the system of old values. If a mobile node $p_{i}$ moves far away and keep some very old values in $Neb_{i}$ for a long period of time (which is not bounded by a number of rounds), a negative impact on the convergence can be observed.

In the proposed protocol, a general reset is performed by all the correct nodes every $R_{c}$ rounds (line 18 and 19). By construction, line 19 is executed during a round $r$ such that $r = k R_{c}$ with $k \geq 1$. The execution of line 3 during the initialization phase can also be considered as a general reset performed during a fictive round numbered 0. We define the set $S^{c}$ as the set of all the rounds $r_c$ such that the instruction "$Neb_{i}(r_c) \leftarrow null$" as been executed either at line 3 or at line 19 of round $r_c - 1$. These rounds are called {\it common new starting} rounds. By definition, $S^{c} = \{ r_c \mid r_c = k R_c + 1 {\rm ~with~} k \geq 0\}$. The common new starting rounds allow to divide the computation into phases. Each phase is identified by the value of the integer $k$  and is composed of $R_c$ rounds.
We define also the concept of  {\it local new starting round} as follows. From the point of view of a correct node $p_{i}$, $r$ is a local new starting round, if either $r=1$ or  $p_{i}$ resets $Neb_{i}$ to \emph{null} during round $r-1$. The set of all local new starting rounds of $p_{i}$ is denoted $S_{i}$. Obviously, for any correct node $p_{i}$, $S^{c} \subseteq S_{i}$. By definition, during a phase, a correct node $p_{i}$ executes a reset operation at least once (during the last round of the phase) and at most $R_{c}$ times (each time a new value is computed). The local new starting rounds of a correct node $p_{i}$ are used to identify some particular joint graphs (See Section~\ref{sec:model}). Let $r$ be a round executed by $p_i$. By definition, there exist a unique local new starting round $r_{s_1} \in S_{i}$ such that $r_{s_1} \leq r$  and for any $r_{s_2} \in S_{i}$ either
$r_{s_2} \leq r_{s_1}$ or $r < r_{s_2}$. Round $r_{s_1}$ is $p_{i}$'s latest new starting round.
The joint graph corresponding to the union of the communication graphs observed during the non empty sequence of consecutive rounds beginning with $r_{s_1}$ and ending with $r$ is used to identify the nodes which have communicated their values to $p_{i}$ during this period.
In this paper, the notation $JN_{i}^{r}$ is used to represent the \emph{joint neighbor set} of $p_{i}$ at round $r$.

\subsection{Validity Property and Legal Values}

\begin{theorem}
The proposed protocol satisfies the \emph{validity} property.
\end{theorem}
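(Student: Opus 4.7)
The plan is to establish the validity property by induction on the round number $r$, maintaining the invariant that $v_i(r) \in [v_{min}(1), v_{max}(1)]$ for every correct node $p_i$. The base case $r = 1$ is immediate from the meaning of the initial values. For the inductive step, I distinguish the two branches in the protocol that can define $v_i(r+1)$. When the test at line 12 fails, line 17 sets $v_i(r+1) = v_i(r)$ and the invariant is preserved trivially. When the test succeeds, $v_i(r+1)$ is the output of \emph{average}, a weighted mean of $v_i(r)$ and the values that survive \emph{reducing} inside $Neb_i(r)$, so the core of the proof is to show that every surviving value lies in $[v_{min}(1), v_{max}(1)]$.

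The backbone of that argument is the observation that $Neb_i$ stores at most one value per source (the most recent one overwrites the previous one, as noted in the protocol description), so it contains at most $f$ values produced by Byzantine nodes; consequently, among any $f+1$ entries of $Neb_i$, at least one is a value $v_k(r')$ sent by a correct node $p_k$ at some round $r' \leq r$, and that value lies in $[v_{min}(1), v_{max}(1)]$ by the inductive hypothesis. I would use this counting argument twice. In case~a ($x > y$), the procedure deletes the whole set $B$ of the $f$ largest values, so for any surviving $v_j$ the $f+1$ values $\{v_j\} \cup B$ all satisfy $\geq v_j$ and include at least one correct value, which yields $v_j \leq v_{max}(1)$. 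For the lower bound on a surviving $v_j$ in case~a, I would split: if $v_j \geq v_i(r)$, then $v_j \geq v_i(r) \geq v_{min}(1)$ directly by the hypothesis applied to $v_i(r)$; otherwise $v_j$ cannot belong to $S$ (all its sub-$v_i(r)$ elements are suppressed), so $v_j$ sits above every element of $S$, and applying the same counting argument to $\{v_j\} \cup S$ furnishes a correct value that is $\leq v_j$ and $\geq v_{min}(1)$, hence $v_j \geq v_{min}(1)$. Case~b is handled by a symmetric argument.

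Once every surviving value of $Neb_i(r)$ and $v_i(r)$ itself lie in $[v_{min}(1), v_{max}(1)]$, the call to \emph{average} returns a convex combination of points of that interval (each weight equals $1/(|Neb_i(r)|+1) > 0$), so $v_i(r+1)$ remains in the interval and the induction closes. The delicate point I anticipate is precisely the asymmetric reducing rule of this protocol: unlike the classical variant, which always discards both the $f$ largest and the $f$ smallest values, the procedure here can retain part of $S$ in case~a (the elements that are $\geq v_i(r)$) and part of $B$ in case~b. The bookkeeping for the lower bound must therefore distinguish middle values below $v_i(r)$ from the $S$-values kept because they are above $v_i(r)$, and must appeal to the counting argument on $S$ rather than on $B$. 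I would also check the edge situations where $|Neb_i(r)|$ is small enough for $B$ and $S$ to overlap, in order to make sure the $f+1$ witness still exists.
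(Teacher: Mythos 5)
Your proposal is correct and follows essentially the same route as the paper: induction on the round number, with the key step being that any value surviving the \emph{reducing} procedure while lying outside $[v_{min}(1), v_{max}(1)]$ would force at least $f+1$ Byzantine-supplied entries in $Neb_i$, contradicting the bound $f$ and the one-value-per-source property of the log. Your write-up is in fact more explicit than the paper's (which compresses the counting argument on $B$ and $S$ into a single sentence), and your careful handling of the asymmetric reducing rule for the lower bound is a welcome elaboration rather than a deviation.
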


\begin{proof}
Obviously, the  property is satisfied during the first round: $\forall p_{i} \in C_{n}, v_{i}(1) \in [v_{min}(1), v_{max}(1)]$. Let us consider that the property is satisfied during any round smaller than or equal to $r$. To violate the property during round $r+1$, at least one correct process $p_{i}$ must modify its value during the execution of the average procedure and must adopt a new value with is either smaller than $v_{min}(1)$ or greater than $v_{max}(1)$. Due to the properties of the average function, at least one value that is either smaller than $v_{min}(1)$ or greater than $v_{max}(1)$ must appear in the multi-set $Neb_{i}(r)$. A value $v$ contained in this set is either proposed by a correct node or by a Byzantine node. In the first case, due to the induction assumption, $v$ belongs to the range $[v_{min}(1), v_{max}(1)]$. In the second case, $v$ can remains in $Neb_{i}(r)$ after the execution the reducing procedure only if  at least $f+1$ fake values have been gathered. As the number of Byzantine nodes is bounded by $f$, the \emph{validity} property is always satisfied.
\end{proof}

Some works~\cite{ex7} adopt a property which is stronger than the above validity property. During the whole computation, the maximum value proposed by a correct node has to be  monotonically non-increasing and similarly the minimum value has to be monotonically non-decreasing. More precisely, for any round $r \geq 1$, the conditions $v_{min}(r) \leq v_{min}(r+1)$ and $v_{max}(r) \geq v_{max}(r+1)$ must hold.
The proposed protocol can satisfy this stronger property if and only if $R_{c} = 1$. When $R_{c} > 1$, as a correct node $p_{i}$ may keep old values in its log $Neb_{i}(r)$, the above conditions are not always true. The new value computed by $p_{i}$ during round $r$, namely $v_{i}(r+1)$, may be less than $v_{min}(r)$ or bigger than $v_{max}(r)$. To take this possibility into account, we define first the concept of \emph{legal value} and then we propose a {\it safety} property which is stronger than our original {\it validity} property.

\begin{definition}
Let $r$ by a round number such that $r = kR_{c}+m$ with $k\geq0$ and $1 \leq m \leq R_c$.
The value $v_{i}(r)$ of a correct node $p_{i}$ is legal if the two conditions $v_{i}(r) \geq v_{min}(d)$ and $v_{i}(r) \leq v_{max}(d)$ are satisfied when the round number d is defined as follows:
\begin{enumerate}
\item $(k = 0) \wedge (m = 1)$: $d = 1$
\item $(m \neq 1)$: $d = kR_{c}+1$
\item $(k \neq 0) \wedge (m = 1)$: $d = (k-1)R_{c}+1$
\end{enumerate}
\end{definition}

\begin{lemma}
$\forall p_{i} \in C_{n}$, $\forall r \geq 1$, $v_{i}(r)$ is legal.
\end{lemma}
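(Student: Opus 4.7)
The plan is to prove the lemma by a nested induction: an outer induction on the phase number $k \geq 0$ (where phase $k$ groups the $R_{c}$ consecutive rounds starting at $r_{k} = kR_{c} + 1$) and an inner induction on rounds within the current phase. The key structural fact driving the argument is that the reset schedule built into the protocol (the initialization at line 3, together with the two reset points at line 15 and line 19) guarantees $Neb_{i}(r_{k}) = null$ for every correct node $p_{i}$ at every common new starting round $r_{k}$, so whenever $p_{i}$ evaluates the average procedure at some round $r \in [r_{k}, r_{k+1}-1]$, the log $Neb_{i}(r)$ can only contain values emitted at rounds in $[r_{k}, r]$. I would reformulate the claim as the phase-local invariant
\[
\forall p_{i} \in C_{n},\ \forall r \in [r_{k}+1, r_{k+1}],\ v_{i}(r) \in [v_{min}(r_{k}), v_{max}(r_{k})],
\]
which, together with the trivial base $v_{i}(1) \in [v_{min}(1), v_{max}(1)]$, matches Definition 2 term by term: a round $r = kR_{c} + m$ with $m \in [2, R_{c}]$ is case 2 with $d = r_{k}$, while the round $r_{k+1}$ is the first round of phase $k+1$, covered by case 3 with $d = r_{k}$.

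The inner induction runs over rounds inside phase $k$. The base $r = r_{k}$ is immediate from the definitions of $v_{min}(r_{k})$ and $v_{max}(r_{k})$, and if $p_{i}$ does not update during round $r$ then $v_{i}(r+1) = v_{i}(r)$ preserves the bound. If $p_{i}$ does update, $v_{i}(r+1)$ is a weighted average of $v_{i}(r)$ and the values of $Neb_{i}(r)$ that survive reducing; values contributed by correct senders lie in $[v_{min}(r_{k}), v_{max}(r_{k})]$ by the inner hypothesis (their emissions took place at rounds in $[r_{k}, r]$, still inside phase $k$), and so does $v_{i}(r)$.

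The main obstacle is to show that every Byzantine value that survives reducing also lies in $[v_{min}(r_{k}), v_{max}(r_{k})]$. For this I would transpose the reducing argument of Theorem 1 to the phase-local bounds. Suppose for contradiction that a Byzantine value $v > v_{max}(r_{k})$ remains in $Neb_{i}(r)$ after reducing. Every correct value in $Neb_{i}(r) \cup \{v_{i}(r)\}$ is at most $v_{max}(r_{k}) < v$, so in particular $v > v_{i}(r)$ and any element of $Neb_{i}(r)$ that is $\geq v$ must come from a Byzantine sender. In both branches of the reducing procedure, surviving while being strictly larger than $v_{i}(r)$ forces $v \notin B$; but then the $f$ elements of $B$ are all $\geq v$, hence Byzantine, and together with $v$ itself they yield $f+1$ distinct Byzantine contributors to $Neb_{i}(r)$ (recalling that $Neb_{i}$ retains at most one value per sender), contradicting $\mid F_{n} \mid \leq f$. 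A symmetric argument rules out $v < v_{min}(r_{k})$, so all values fed to the average lie in $[v_{min}(r_{k}), v_{max}(r_{k})]$, and so does $v_{i}(r+1)$, closing both inductions.
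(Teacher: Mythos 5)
Your proof is correct and follows essentially the same route as the paper's: both rest on the observation that the periodic reset confines the provenance of $Neb_{i}$ to the current phase, that correct values are therefore bounded by the phase-start extremes, and that an out-of-range value surviving the reducing procedure would require $f+1$ distinct Byzantine senders. Your nested induction is just the paper's ``first violating round'' contradiction recast as an explicit induction, and your unpacking of why the surviving bad value forces all of $B$ to be Byzantine is a slightly more detailed rendering of the step the paper states tersely.
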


\begin{proof}
Depending on the round number $r = kR_{c}+m$, three cases that are mutually exclusive have to be considered. When $m=1$ and $k=0$, the value $v_{i}(1)$ of a correct process $p_{i}$ is in the range $[v_{min}(1), v_{max}(1)]$. In the two remaining cases, we prove that $v_{i}(r) \leq v_{max}(d)$. A similar demonstration can be done to conclude that $v_{i}(r) \geq v_{min}(d)$.

If  $m \neq 1$, then $r$ is not a common new starting round. The nearest previous common new starting round is $d = kR_{c}+1$. As $m>1$, we have $r > d$. By definition, at the beginning of round $d$, every correct node $p_{j}$ has no value in its set $Neb_{j}(d)$. Furthermore, at that time, for any correct node $p_{j}$, the property $v_{j}(d) \leq v_{max}(d)$ holds. Now the proof is by contradiction. Let us consider that $r$ is the very first round greater than $d$ during which at least one correct node $p_{i}$ violates the property. Thus, we have $v_{i}(r) > v_{max}(d)$. The computation of the value $v_{i}(r)$ has been done by $p_{i}$ during the previous round $r-1$. All the values used during the execution of the average procedure by $p_{i}$ have been received by $p_{i}$ during round $r-1$ and may be during rounds $r-2$, $\ldots$,$d+1$ and $d$. In all the possible cases, any value $v$ received from a correct node is such that $v \leq v_{max}(d)$. To have still a value greater than $v_{max}(d)$ and thus greater than $v_{i}(r-1)$ in its log $Neb_{j}(r-1)$ after the execution of the reducing procedure, $p_{i}$ must gather $f+1$ fake values. This contradicts both the fact that the network contains at most $f$ Byzantine nodes and the fact that a node (correct or not) cannot insert two different values in the multi-set $Neb_{i}$ of a correct node $p_{i}$.

If $m=1$ and $k>0$, then $r$ is a common new starting round. The value $v_{i}(r)$ has been computed by $p_{i}$ during the round $r-1 = kR_{c}$ and $Neb_{i}(r-1)$ may contain values proposed during the $R_{c}$ previous rounds. As the round $d = (k-1)R_{c}+1$ is also a common new starting round, a similar reasoning leads to conclude that $v_{i}(r) \leq v_{max}(d)$.
\end{proof}

Note that after the execution of the reducing procedure , all the remaining values are legal.  
The following corollary focuses on the common new starting rounds that identify the beginning of phases. This corollary can be considered as our new {\it safety} property. 

\begin{corollary}
Safety property: \\
$\forall r \in S^{c}$, $\forall p_{i} \in C_{n}$, $\forall x \geq 0$,  $v_{i}(r+x) \geq v_{min}(r)$ and $v_{i}(r+x) \leq v_{max}(r)$
\end{corollary}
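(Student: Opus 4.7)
The plan is to derive the corollary from Lemma~1 by induction on the number of complete phases elapsed since $r$. Since $r \in S^{c}$, write $r = kR_{c} + 1$ for some $k \geq 0$, and decompose an arbitrary $x \geq 0$ as $x = qR_{c} + s$ with $q \geq 0$ and $0 \leq s \leq R_{c} - 1$. I will prove by induction on $q$ the strengthened statement: for every correct node $p_{i}$ and every $s$ in the above range, $v_{i}(r + qR_{c} + s) \in [v_{min}(r), v_{max}(r)]$.

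For the base case $q = 0$, the sub-case $s = 0$ is immediate because $v_{min}(r)$ and $v_{max}(r)$ are, by definition, the extremes of correct nodes' values at round $r$. For $1 \leq s \leq R_{c} - 1$, the round $r + s = kR_{c} + (1+s)$ has $m = 1 + s \neq 1$, so case~2 of Lemma~1 applies with $d = kR_{c} + 1 = r$, yielding $v_{i}(r+s) \in [v_{min}(r), v_{max}(r)]$.

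For the inductive step, assume the claim holds for phase $q$. In particular, at the common new starting round $r + qR_{c} = (k+q)R_{c} + 1$, every correct value lies in $[v_{min}(r), v_{max}(r)]$, so $v_{min}(r + qR_{c}) \geq v_{min}(r)$ and $v_{max}(r + qR_{c}) \leq v_{max}(r)$. To handle $s = 0$ at phase $q+1$, I apply case~3 of Lemma~1 to the round $(k+q+1)R_{c} + 1$; this bounds $v_{i}(r + (q+1)R_{c})$ by $v_{min}(d)$ and $v_{max}(d)$ with $d = (k+q)R_{c} + 1 = r + qR_{c}$, and the inequalities just derived give $v_{i}(r + (q+1)R_{c}) \in [v_{min}(r), v_{max}(r)]$. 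For $1 \leq s \leq R_{c} - 1$ at phase $q+1$, I reapply the base-case argument relative to the common new starting round $r + (q+1)R_{c}$, using the fact that its correct values are now contained in $[v_{min}(r), v_{max}(r)]$.

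The only delicate point is the handoff between consecutive phases at $s = 0$, and case~3 of Lemma~1 is precisely the tool needed there. No fresh argument about Byzantine interference or the reducing procedure is required, since those aspects were fully absorbed into the proof of Lemma~1; the corollary is essentially a bookkeeping consequence stating that the nested structure of intervals $[v_{min}(r + qR_{c}), v_{max}(r + qR_{c})]$ contracts monotonically within $[v_{min}(r), v_{max}(r)]$.
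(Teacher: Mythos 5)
Your proof is correct and follows essentially the same route as the paper: decompose $x$ relative to the phase length $R_{c}$, then chain Lemma~1 (cases 2 and 3) across successive common new starting rounds, using the nesting of the intervals $[v_{min}(r+qR_{c}), v_{max}(r+qR_{c})]$. The paper states the induction over multiples of $R_{c}$ only as "obvious"; you have merely made that step explicit.
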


\begin{proof}
As $r \in S^{c}$, there exists an integer $k \geq 0$  such that $r= kRc+1$. When $x = R_c$, we have $r+x = (k+1)Rc +1$. Due to lemma 1, we conclude directly that the two conditions holds. When $x$ is a multiple of $R_c$, the proof is also obvious. Finally, when $x = k'R_c+m'$ with $1 \leq m' \leq R_c -1$, we have $r+x = (k+k')Rc +m'+1$. Again the proof relies on Lemma 1.
\end{proof}

\section{Related works}
\label{sec:relatedworks}

Dolev et al. are the first to address the approximate consensus problem in the presence of failures\cite{ex1}. Under the assumptions that the network is fully connected and the total number of nodes is known, \cite{ex1} proposes \emph{reducing}, \emph{selecting} and \emph{average} operations and then presents two consensus protocols in a synchronous and an asynchronous environment, separately. In~\cite{ex6}, Abraham et al. improve the protocol proposed in~\cite{ex1}: only $3f+1$ nodes are needed in an asynchronous environment.

Azadmanesh et al.  extend approximate consensus to partially connected networks~\cite{ex8,ex9}. However without using flooding, they did not completely achieve global convergence. Approximate consensus problem is also addressed in multi-agent system \cite{ex5,ex10,ex12,ex2,ex3}. These protocols are called linear iterative consensus and mainly based on linear control theory and matrix theory. Without Byzantine failure, \cite{ex5} indicates that in an undirected graph a sufficient and necessary condition for \emph{convergence} consists in having adequate \emph{joint} connected graphs. For a directed graph, \cite{ex10} points out that a sufficient and necessary condition consists in having a spanning tree contained in adequate \emph{joint} connected graphs. When no Byzantine failure occurs, the speed of \emph{convergence} was analyzed in \cite{ex12}. Based on the knowledge of the global topology, \cite{ex2} and \cite{ex3} address approximate consensus problem in systems where nodes suffer from Byzantine faults.

Without flooding and global topology information, to our knowledge, \cite{ex13} is the first paper where a solution to the approximate Byzantine consensus problem based on the linear iteration method is proposed. A sufficient condition on the network topology is proposed. When this condition is satisfied,  \emph{convergence} is ensured.

While~\cite{ex13} only shows a sufficient condition, \cite{ex7} and \cite{ex15} define a sufficient and necessary condition almost simultaneously. Their new arguments are also related to topology. Yet their conditions are static and can not be adapted directly to mobile environments.

Convergence and gathering problems in environments with mobile robots are also similar with approximate consensus. Each robot needs to make the next moving action according to the results returned by its sensors~\cite{ex17}. However they did not consider any topology requirements: each robot can sense all the other ones.

\section{Sufficient \& Necessary Condition}
\label{sec:condition}

The sufficient and necessary conditions proposed in previous works~\cite{ex7,ex13,ex15} consider a static topology. In our mobile system, the topology is not fixed and changes each time  a node moves. From \emph{Corollary 1}, we know that each time a common new starting round $r$ is reached, $v_{max}$ can no more increase and $v_{min}$ can no more decrease in the future. But, for example, if the network is partitioned into two disconnected sub-networks, an approximate agreement cannot be reached: nodes that belong to the first group may converge to  a value $v1$  while the others may converge to a value $v2$. Even if, during each round $r$, $v_{max}(r)$ can continue to decrease or $v_{min}(r)$ can continue to increase, this does not guarantee that the convergence property will be satisfied.

In this paper a sufficient and necessary condition is proposed. This condition is compatible with the fact that the topology is always changing. Moreover, by its very definition, the proposed condition consider the dynamic evolution of the distribution of values within the system. More precisely, it focuses on the particular correct nodes that have currently either  the value $v_{min}$ or the value $v_{max}$. At least one of these nodes has to receive from its neighborhood enough messages (quantity constraint)  that contain values different from its own current value (quality constraint).

To formally define what is expected in terms of quality, we first provide the definition of a {\it proper value}.  The problem definition (See Section~\ref{sec:model}) refers to a parameter $\epsilon$  which sets the level of precision that needs to be obtained to consider that an agreement is reached. We introduce a second parameter called $\delta$ whose range of possible values is $(0, \frac{\epsilon}{2}]$. This parameter, which is not used in the protocol, is necessary to define the condition and the notion of proper value on which the condition relies.
This non-zero positive integer (whose value may be very small) allows us to define five intervals of values as follows. When a common new starting round $r \in S^{c}$ begins, all correct nodes have values in the range $[v_{min}(r), v_{max}(r)]$. Five value intervals are defined.
 \emph{Minimum value} corresponds to the value $v_{min}(r)$. \emph{Maximum value} corresponds to the value $v_{max}(r)$. \emph{Nearly minimum value} represents the value interval $(v_{min}(r),v_{min}(r)+\delta)$.  Symmetrically, \emph{Nearly maximum value} represents the value interval $(v_{max}(r)-\delta, v_{max}(r))$. Finally, \emph{Middle value} represents the value interval $[v_{min}(r)+\delta,v_{max}(r)-\delta]$. These three intervals are defined at the beginning of a new phase ({\it i.e.} just before a round $r=kR_c+1$ begins) and will not change during $R_c$ consecutive rounds. During a round $r'$ of the phase $k$ ($r \leq  r' < r+R_{c}$), a correct node $p_{i}$ may change its value $v_{i}$. Depending on its value $v_{i}(r')$, a correct node is classified in one of the following five groups:  $CMin(r')$,  $CNin(r')$, $CMid(r')$, $CNax(r')$,  and $CMax(r')$. The letter $C$ at the beginning of the name of a group indicates that the members of the group are correct nodes. Throughout a phase $k$, the same interval of value is associated to a group. During a round $r'$, the rule for assigning a correct node $p_{i}$ to a group is simple: the value $v_{i}(r')$ must belong to the corresponding interval. During a round, a correct node belongs to exactly one group. By definition, the distribution of the nodes into the five  groups may change at each round of a phase.  Figure 4(a) summarize the above discussion.

\begin{figure}[ht]
    \centering
    \includegraphics[width=2.8in]{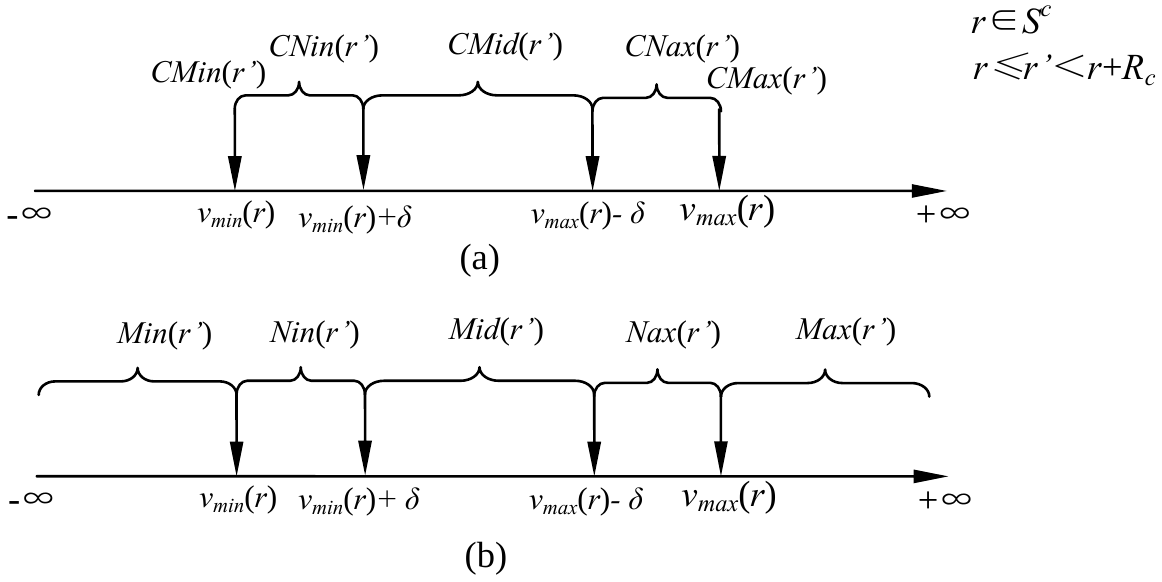}
    \caption{Value intervals and node sets}
    \label{Fig4}
\end{figure}

Byzantine nodes may exist in the system. They can propose values that belong to $[v_{min}(r), v_{max}(r)]$ but also values that are beyond this interval. If Byzantine nodes propose legal values, they also make sense. We use $Max(r')$, $Min(r')$, $Nax(r')$, $Nin(r')$ and $Mid(r')$ to represent groups that mix correct nodes and Byzantine nodes. Due to the Byzantine nodes, the value intervals corresponding to $Min(r')$ and $Max(r')$ are enlarged respectively to $(-\infty, v_{min}(r)]$ and $[v_{max}(r), +\infty)$. Of course, $CMax(r') \subseteq Max(r')$,  $CNax(r')\subseteq Nax(r')$, $CMid(r')\subseteq Mid(r')$, $CNin(r')\subseteq Nin(r')$, and $CMin(r') \subseteq Min(r')$. Node that a Byzantine node can belong to several group during the same round. These five new groups are depicted in Figure 4(b). Except $Max$ and $Min$, the three other groups can be empty. $Nin$ and $Nax$ are two special groups which are used to distinguish legal values that are sufficiently different from either the current minimal value $v_{min}$ or the current maximal value $v_{max}$.

The proposed condition only affects the nodes that have either the minimum value or the maximum value when a phase $k$ begins. Let $r = kR_c +1$ be the first round of this phase. The  targeted nodes belong to  $CMin(r)$ or $CMax(r)$.

\begin{definition}
For any common new starting point $r \in S^{c}$ and for any round $r'$ such that  $r \leq r'< r+R_{c}$,  a proper value, from the point of view of a correct node that belongs to  $CMin(r)$ is a value $v_{j}(r')$ that belongs to the interval $[v_{min}(r)+\delta, +\infty)$ while a proper value, from the point of view of a correct node that belongs to  $CMax(r)$ is a value $v_{j}(r')$ that belongs to the interval $(-\infty, v_{max}(r)-\delta]$.
\end{definition}

Note that a proper value is not necessarily a legal value. A proper value can be a fake value proposed by a Byzantine node.
Based on the above definition, we can now express the proposed condition.

\begin{theorem}
The \emph{convergence} property is satisfied by the proposed protocol if the following sufficient condition always holds:\\
$\forall r \in S^{c}$ such that  $v_{max}(r) - v_{min}(r) \geq \epsilon$,\\
$\exists p_{i} \in C_n$ such that $p_{i} \in CMax(r) \cup CMin(r)$,\\
$\exists r'$ such that  $r \leq r'< r+R_{c}$,\\
$\exists V_{q} \in V$ such that $V_{q} \subseteq JN_{i}^{r'}$,\\
$\mid V_{q} \mid \geq f+1$ and $\forall p_{j}\in V_{q}$, $v_{j}(r')$ is a proper value.
\end{theorem}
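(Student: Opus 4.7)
The plan is to argue by contradiction. Suppose the \emph{convergence} property fails, so there is some $\epsilon > 0$ for which the gap $g(r) = v_{max}(r) - v_{min}(r)$ stays $\geq \epsilon$ at every common new starting round $r \in S^c$. By Corollary~1 the sequence $\{v_{min}(r)\}_{r \in S^c}$ is non-decreasing and $\{v_{max}(r)\}_{r \in S^c}$ is non-increasing, so $g$ is already non-increasing on $S^c$. The goal is then to show that the hypothesised condition forces $g$ to shrink by a fixed positive amount within a bounded number of phases, which together with the upper bound $g(r) \leq v_{max}(1) - v_{min}(1)$ yields a contradiction.

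First I would analyse one phase starting at $r \in S^c$ in which the condition is witnessed by $p_i \in CMin(r)$ (the $CMax$ case is symmetric). Let $r' \in [r, r+R_c)$ and $V_q \subseteq JN_i^{r'}$ with $|V_q| \geq f+1$ and each entry a proper value, i.e.\ $\geq v_{min}(r) + \delta$. Since at most $f$ nodes are Byzantine, at least one $p^{*} \in V_q$ is correct and delivers a legal proper value into $Neb_i(r')$. Safety gives $v_i(r') \geq v_{min}(r)$, so the $f+1$ proper values pass the test at line~12 on the $x \geq f+1$ side (a small additional check handles the subcase in which $p_i$ has already updated earlier in the phase, so that $v_i(r') > v_{min}(r)$; legality then still pins $v_i(r') \leq v_{max}(r)$ and the same chain of inequalities goes through with $v_i(r')$ in place of $v_{min}(r)$). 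A pigeonhole on $|B|=f$ shows that at most $f$ of the $f+1$ proper values can be removed by the \emph{reducing} procedure, so at least one survives; by Lemma~1 every surviving value is $\geq v_{min}(r)$. The \emph{average} procedure then assigns weight at least $1/n$ to every term, so
\[
v_i(r'+1) \;\geq\; v_{min}(r) + \frac{\delta}{n}.
\]

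Next I would propagate this gain. Any later update by $p_i$ uses only legal values (all $\geq v_{min}(r)$) and puts weight $\geq 1/n$ on its current value, which gives the one-step contraction $v_i^{\text{new}} - v_{min}(r) \geq (1/n)\bigl(v_i^{\text{old}} - v_{min}(r)\bigr)$. Iterating over the at most $nR_c$ updates that can occur in the next $n$ phases keeps $p_i$'s excess strictly positive and bounded below by some fixed $\gamma > 0$ depending only on $\delta$, $n$ and $R_c$; in particular $p_i$ can never again hit $v_{min}(r)$, because a weighted average of values all $\geq v_{min}(r)$ with at least one strictly larger is strictly larger than $v_{min}(r)$.

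Finally I would aggregate across phases. Fix $r_0 \in S^c$, set $a = v_{min}(r_0)$, $b = v_{max}(r_0)$, and let $N_a(r)$, $N_b(r)$ count the correct nodes whose value at round $r$ equals $a$ and $b$ respectively. As long as $v_{min}$ is still $a$ and $v_{max}$ is still $b$, each application of the condition targets a $p_i \in CMin \cup CMax$ that leaves its extreme permanently, so $N_a + N_b$ strictly decreases; starting from $N_a + N_b \leq |C_n| \leq n$, after at most $n$ phases either $N_a = 0$ (and $v_{min}$ strictly exceeds $a$) or $N_b = 0$ (and $v_{max}$ is strictly below $b$), producing a strict decrease of $g$ of at least $\gamma$. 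Repeating gives a decrease of $g$ by $\gamma$ every $n$ phases, which is incompatible with $g$ being bounded above by $v_{max}(1) - v_{min}(1)$ and below by $\epsilon$, the desired contradiction. The step I expect to be hardest is precisely this aggregation: the single-phase argument only yields a strict gain for one particular node, while a strict improvement of the \emph{global} extreme $v_{min}$ or $v_{max}$ requires the integer-valued quantity $N_a + N_b$ to reach zero, and one has to verify carefully that no combination of Byzantine behaviour and subsequent updates can ever reinstate a correct node exactly at $a$ (or $b$) before that integer counts down.
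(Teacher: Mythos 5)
Your overall strategy is the same as the paper's: a single averaging step with $f+1$ proper values lifts the targeted extreme node by at least $\delta/n$ above the current extreme, subsequent averages over legal values can only decay this gain to something like $\delta/n^{nR_c}$, a cardinality countdown on $CMin\cup CMax$ forces the global extreme to move within $n$ phases, and a contradiction with the gap staying $\geq\epsilon$ finishes the job. (The paper packages the last step as ``the gap cannot approach a limit $\mu\geq\epsilon$, and once the limit is breached the breach is permanent,'' whereas you argue a fixed decrement per block of $n$ phases; these are morally the same argument, and yours is the cleaner framing.)

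The genuine gap is in your aggregation step, specifically the claim that driving $N_a+N_b$ to zero produces ``a strict decrease of $g$ of at least $\gamma$.'' The condition only targets nodes of $CMin(r)$ and $CMax(r)$, i.e.\ nodes whose value \emph{equals} the extreme. A correct node sitting in $CNin(r)$ at value $a+\eta$ with $0<\eta\ll\gamma$ is never obligated to receive anything and may keep its value (or drift toward $a$ by averaging with values near $a$, which safety permits). So when $N_a$ reaches $0$, the new minimum can be $a+\eta$, and the gap shrinks by only $\eta$, not by $\gamma$; iterating, the per-window decrements need not be bounded away from zero, and your final ``$\gamma$ every $n$ phases contradicts boundedness'' conclusion does not follow as stated. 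You correctly sensed that the aggregation is the hard step, but you located the danger in a node being \emph{reinstated exactly at} $a$ (which the strictly-positive-average observation does rule out) rather than in the nearly-minimal nodes that were never at $a$ in the first place. For what it is worth, the paper's own proof elides the same difficulty: it declares ``without loss of generality'' a worst case in which every non-maximal node shares the value $v_{min}(r)$, which is precisely the configuration in which $CNin$ is empty and the problem cannot arise. So your proposal reproduces the paper's argument, including its weakest point, but states the problematic bound more explicitly than the paper does, which makes the gap visible rather than hidden.
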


As long as the convergence test is not satisfied, for each phase (characterized by its associated common new starting round $r$), at least one correct node $p_{i}$ among those which are members of the groups $CMin(r)$ or $CMax(r)$ must, at least once during the phase ({\it i.e.} during a round $r'$), compute a new value using $f+1$ (quantity constraint) proper values (quality constraint) received during the current phase. As mentioned before, a proper value is not a legal value. But, due to the fact that at least $f+1$ are gathered by $p_{i}$, at least one of them is not removed during the reducing procedure and is a legal value.
Note that the above condition does not ensure that either $v_{min}$ increases or $v_{max}$ decreases during a phase. In fact, several correct nodes may have the minimum value or the maximum value when the phase begins. The condition just ensures that at least one of them will increase or decrease its value.

Now the reason why $Nax$ and $Nin$ have been defined is explained. The requirements expressed in \emph{Theorem 2} only focus on the nodes of  the groups $CMin(r)$ and $CMax(r)$: the other correct nodes are not concerned.  Suppose $v_{max}(r) - v_{min}(r) \geq \epsilon$ and values from $Nax$ and $Nin$ are not excluded. In this situation, if the nodes that belong to $CMin(r)$ and $CMax(r)$ only receive values from respectively $(v_{min}(r), v_{min}(r)+\delta)$ and $(v_{max}(r)-\delta,v_{max}(r))$ and if all the other correct nodes do not change their values, then there may exist a value $\mu \geq \epsilon$, such that $\lim\limits_{r\rightarrow +\infty} v_{max}(r)-v_{min}(r)=\mu$ .

Mobility has a strong impact on the fact that the condition can be satisfied or not. If some correct nodes are always moving far away, convergence can not be obtained. In fact, a correct node can remain isolated as long as it is neither in $CMax(r)$ nor in $CMin(r)$. But, of course, in many applications, a correct node can not always determine if it is currently concerned or not by the condition. The fact that the nodes in $CMin(r)$ or $CMax(r)$ can obtain enough proper values depends not only on the trajectory and the speed of the correct nodes. It depends also on the cardinality of the system ({\it i.e.} the cardinality of the five sets). For example, if the cardinality of the union set $\bigcup\limits_{r'}(Max(r')\cup Nax(r')\cup Mid(r'))$ ($r'\in\{r,r+1,...,r+R_{c}-1\}$) is smaller than $f+1$, no node of $CMin(r)$ has the possibility to meet both the quantity and quality constraint. However, in that case, the cardinality of the union set $\bigcup\limits_{r'}(Min(r')\cup Nin(r')\cup Mid(r'))$ should be sufficient to ensure that at least one node that belongs to $CMax(r)$ can collect enough proper values.

\begin{lemma}
To ensure that at least one node (either in $CMin$ or in $CMax$) can collect enough proper values, the cardinality of the system must satisfy the following constraint: $n\geq 3f+1$.
\end{lemma}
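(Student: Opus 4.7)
The plan is to prove necessity by exhibiting, whenever $n \leq 3f$, an adversarial configuration in which neither any node of $CMin(r)$ nor any node of $CMax(r)$ can assemble $f+1$ proper values. The argument reduces to a short double-counting identity together with an explicit Byzantine strategy, so I do not expect any difficult step; the only subtlety is selecting a Byzantine behaviour that is uniformly non-helpful from both viewpoints simultaneously.

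First, I would classify the possible sources of proper values. From the $CMin(r)$ viewpoint, a proper value lies in $[v_{min}(r)+\delta,+\infty)$, so among correct nodes only the members of $CMid(r') \cup CNax(r') \cup CMax(r')$ may contribute; symmetrically, from the $CMax(r)$ viewpoint only correct members of $CMin(r') \cup CNin(r') \cup CMid(r')$ may contribute. For Byzantine nodes I would exploit the fact, recalled in the model, that they may send different values to different neighbours: a Byzantine that sends $v_{min}(r)$ to every $CMin$-neighbour and $v_{max}(r)$ to every $CMax$-neighbour is non-proper on both sides at once. Thus in the worst case the Byzantine set contributes zero proper values, regardless of the viewpoint.

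Second, I would perform the count. Let $a,b,c,d,e$ denote the sizes of $CMin(r), CNin(r), CMid(r), CNax(r), CMax(r)$ and put $n_c=a+b+c+d+e$. The quantity requirement for some node of $CMin(r)$ is $c+d+e \geq f+1$ and for some node of $CMax(r)$ is $a+b+c \geq f+1$. Adding the two yields $n_c+c$, so if both fail we get $n_c+c \leq 2f$, whence $n_c \leq 2f$. Since the worst case admits up to $f$ Byzantines, $n_c$ can be as small as $n-f$, and the bound becomes $n \leq 3f$. Contrapositively, $n \geq 3f+1$ is necessary.

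Finally, to confirm tightness and obtain a concrete witness for $n = 3f$, I would place $f$ correct nodes at value $v_{min}(r)$ and $f$ at value $v_{max}(r)$, leaving $CNin(r)$, $CMid(r)$, and $CNax(r)$ empty. This is realizable as a choice of initial values at $r=1$ (and in general as a legal state, since the hypothesis $v_{max}(r)-v_{min}(r) \geq \epsilon$ is respected), and yields $c+d+e = a+b+c = f < f+1$. Combined with the worst-case Byzantine strategy above, no node in $CMin(r)$ or $CMax(r)$ can reach the quantity threshold, thereby establishing that $n \geq 3f+1$ is indeed necessary.
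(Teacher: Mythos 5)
Your proposal is correct and follows essentially the same two-part approach as the paper: a pigeonhole/counting argument showing that $n\geq 3f+1$ correct-plus-Byzantine nodes guarantee at least $f+1$ potential proper-value sources on one side, plus an explicit extremal configuration with an unhelpful Byzantine strategy showing that $n=3f$ fails. Your double-counting identity $(c+d+e)+(a+b+c)=n_c+c$ is a marginally cleaner rendering of the paper's pigeonhole step (it absorbs the $CMid$ group directly instead of arguing it can be assumed empty), and your witness ($f$ nodes at $v_{min}$, $f$ at $v_{max}$) and Byzantine strategy (send $v_{min}$ to $CMin$-neighbours and $v_{max}$ to $CMax$-neighbours) differ only cosmetically from the paper's choices while yielding the same count of exactly $f$ proper values on each side.
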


\begin{proof}
Suppose that there is only $3f$ nodes in the network. Think about this situation $\mid CMax\mid=1$, $\mid CMin\mid=1$, $\mid CNax\mid=f-1$,$\mid CNin\mid=f-1$ and $\mid CMid\mid=0$. Moreover,  the $f$ Byzantine nodes propose values bigger than $v_{max}$ to the nodes in $CMax$ and, at the same time, the $f$ byzantine nodes propose values smaller than $v_{min}$ to nodes in $CMin$. In that case only $f$ proper values can be seen by nodes in $CMax$ and $CMin$. The cardinality of the system is not sufficient to provide "quantity" and "quality" simultaneously.

While $v_{max} - v_{min} \geq \epsilon$, if $n=3f+1$, there is always a chance to satisfy at least one node in $CMax$ or $CMin$. The proof is by contradiction. Suppose there is no chance to gather enough proper values neither for the nodes in $CMax$ nor for those in $CMin$. Suppose Byzantine nodes send $f$ illegal values bigger than $v_{max}$ to nodes in $CMax$ and send $f$ values smaller than $v_{min}$ to nodes in $CMin$ or just keep silent. In that way the Byzantine nodes do not contribute to the satisfaction of the condition. The remaining $2f+1$ are all correct nodes. Suppose $\mid CMid \mid=0$, because any nodes belongs to $CMid$ helps both the nodes of $CMin$ and $CMax$ to satisfy the constraints.  So according to the pigeonhole principle, at least one of the following two inequalities must be true:
$\mid CMax + CNax\mid\geq f+1$ or $\mid  CMin + CNin\mid\geq f+1$.
A contradiction.
\end{proof}

Let us now consider that the system is populated with a sufficient number of correct nodes:  $n\geq 3f+1$. Even if the nodes travel arbitrarily within the system, we assume that the condition is always satisfied. First we prove two general lemmas related to the convergence property. By definition, the convergence property is a stable property. Once the convergence is reached, this property  remains true.

\begin{lemma}
\label{lemmax}
Let $r$ be a common new starting point. If $v_{max}(r) - v_{min}(r) < \epsilon$ then convergence is already reached when round $r$ starts.
\end{lemma}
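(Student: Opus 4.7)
The plan is to obtain this lemma as an immediate consequence of the safety property established in Corollary 1. The intuitive picture is that once a common new starting round $r$ begins with all correct values squeezed into an interval of width less than $\epsilon$, the safety property ``locks'' every correct node into that same interval forever afterwards, so the convergence clause is vacuously satisfied from $r$ onwards.

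Concretely, I would proceed in three short steps. First, invoke Corollary 1 with this particular $r \in S^{c}$: for every correct node $p_i$ and every $x \geq 0$, one has $v_{min}(r) \leq v_i(r+x) \leq v_{max}(r)$. Second, apply this to an arbitrary pair of correct nodes $p_i, p_j$ to deduce
\[
|v_i(r+x) - v_j(r+x)| \;\leq\; v_{max}(r) - v_{min}(r) \;<\; \epsilon,
\]
where the final strict inequality is exactly the hypothesis of the lemma. Third, conclude by exhibiting $N = r-1$ (any non-negative integer strictly less than $r$ works): then for every $r' > N$ and every pair of correct nodes $p_i, p_j$, the bound $|v_i(r') - v_j(r')| < \epsilon$ holds, which is precisely the convergence property from Definition 1.

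There is essentially no hard part here: the only subtlety is checking that Corollary 1 is quantified over \emph{all} $x \geq 0$ (so it covers every future round, not just those inside the same phase), and that the bound obtained is strict, which follows directly because the hypothesis gives strict inequality on the phase-wide spread. No case distinction on the internal structure of a phase, and no reasoning about the averaging or reducing procedures, is needed beyond what is already encapsulated in Corollary 1.
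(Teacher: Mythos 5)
Your proposal is correct and follows essentially the same route as the paper: both arguments invoke Corollary~1 to confine every correct node's value to $[v_{min}(r), v_{max}(r)]$ for all rounds $\geq r$, and then conclude that the pairwise spread stays strictly below $\epsilon$ forever. The only cosmetic difference is that you bound $|v_i(r+x)-v_j(r+x)|$ directly while the paper passes through $v_{min}(r')$ and $v_{max}(r')$; the substance is identical.
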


\begin{proof}
Let us assume that $v_{min}(r)$ is proposed by a correct node $p_{i}$ while  $v_{max}(r)$ is proposed by a correct node $p_{j}$. By definition, for any correct node $p_{k}$, $v_{min}(r) \leq v_k(r) \leq v_{max}(r)$. As $r \in S^{C}$, due to Corollary 1, $\forall r' \geq r$, $v_{min}(r) \leq v_k(r') \leq v_{max}(r)$. Therefore, $\forall r' \geq r$, $v_{min}(r) \leq v_{min}(r')$ and $v_{max}(r') \leq v_{max}(r)$. Consequently, as $v_{max}(r) - v_{min}(r) < \epsilon$,  $\forall r' \geq r$, $v_{max}(r') - v_{min}(r') < \epsilon$.  Thus convergence is already reached when round $r$ begins.
\end{proof}

Note that the fact that $r$ is a common starting round is essential in the proof of Lemma~\ref{lemmax}. If $r$ is not a common starting round, it could be the case that $v_{max}(r) - v_{min}(r) < \epsilon$ while $v_{max}(r+1) - v_{min}(r+1) \geq \epsilon$.

\begin{lemma}
\label{lemmay}
When a common new starting round $r$ begins, convergence is already reached if and only if either $CMin(r) = CMax(r)$ or $CNax(r) \cap CNin(r) \neq \emptyset$.
\end{lemma}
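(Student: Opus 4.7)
The plan is to split the biconditional into its two directions and, in both cases, reduce the semantic statement ``convergence is already reached when round $r$ begins'' to the algebraic inequality $v_{max}(r) - v_{min}(r) < \epsilon$. This reduction is immediate from Corollary 1: since $r \in S^{c}$, every correct node keeps its value inside $[v_{min}(r), v_{max}(r)]$ forever, so $v_{max}(r') - v_{min}(r') \leq v_{max}(r) - v_{min}(r)$ for all $r' \geq r$, and convergence at round $r$ is equivalent to the gap at $r$ being below $\epsilon$.

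For the ``if'' direction I treat the two disjuncts separately. If $CMin(r) = CMax(r)$, picking any correct $p_{i}$ in this common set forces $v_{min}(r) = v_{i}(r) = v_{max}(r)$, so the gap is zero and trivially below $\epsilon$. If instead there exists $p_{i} \in CNax(r) \cap CNin(r)$, the two interval definitions give $v_{max}(r) - \delta < v_{i}(r) < v_{min}(r) + \delta$, hence $v_{max}(r) - v_{min}(r) < 2\delta \leq \epsilon$ using $\delta \in (0, \epsilon/2]$. In either case the reduction above yields convergence.

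For the ``only if'' direction I argue by contrapositive: assuming simultaneously $CMin(r) \neq CMax(r)$ and $CNax(r) \cap CNin(r) = \emptyset$, I would derive $v_{max}(r) - v_{min}(r) \geq \epsilon$. The first hypothesis gives $v_{min}(r) < v_{max}(r)$ because each of the two non-empty sets contains a correct node attaining the corresponding extremum, so the two sets cannot coincide unless the two extrema do. The second hypothesis says that no correct node has a value in the overlap band $(v_{max}(r) - \delta, v_{min}(r) + \delta)$; combined with $\delta \leq \epsilon/2$, one then wants to conclude the gap is at least $\epsilon$.

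The main obstacle is this final step of the contrapositive. Converting ``no correct node lies in the open band $(v_{max}(r) - \delta, v_{min}(r) + \delta)$'' into the numeric lower bound $v_{max}(r) - v_{min}(r) \geq \epsilon$ seems to demand a case split: when $v_{max}(r) - v_{min}(r) \geq 2\delta$ the band is automatically empty as a real interval, but going from $\geq 2\delta$ to $\geq \epsilon$ only works cleanly at the boundary $\delta = \epsilon/2$; when $v_{max}(r) - v_{min}(r) < 2\delta$ the non-emptiness of the band must be squared with the absence of correct nodes inside it, using that the endpoints $v_{min}(r)$ and $v_{max}(r)$ themselves are, by definition, excluded from the open intervals defining $CNin$ and $CNax$. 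This geometric bookkeeping, together with the exact role played by the parameter $\delta$, is where I expect the proof to need the most care.
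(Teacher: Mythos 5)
Your ``if'' direction and your reduction of ``convergence is already reached when round $r$ begins'' to the single inequality $v_{max}(r)-v_{min}(r)<\epsilon$ via Corollary~1 match the paper's argument and are sound. The problem is that your ``only if'' direction is never actually completed: you set up the contrapositive, reach the step ``no correct node lies in the band $(v_{max}(r)-\delta,\,v_{min}(r)+\delta)$, hence the gap is at least $\epsilon$'', and then explicitly defer it as ``geometric bookkeeping''. That step is the entire content of this direction, so as written the proposal establishes only one of the two implications.

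The obstacle you flag is real and, in fact, fatal to the implication as stated. Take $n-f$ correct nodes whose values at a common new starting round $r$ are all either $v_{min}(r)$ or $v_{max}(r)$, with $0<v_{max}(r)-v_{min}(r)<\epsilon$ (e.g.\ initial values $0$ and $\epsilon/2$ at round $1$). By Lemma~2 convergence is already reached, and $CMin(r)\neq CMax(r)$; yet $CNin(r)=CNax(r)=\emptyset$ because the defining intervals are open and exclude the two extrema themselves, so neither disjunct of the lemma holds. The paper's own proof of this direction glosses over exactly the two points you isolate: it infers $v_{max}(r)-\delta<v_{min}(r)+\delta$ from $v_{max}(r)-v_{min}(r)<\epsilon$ and $\delta\leq\epsilon/2$, which is valid only at the boundary $\delta=\epsilon/2$; and it then concludes $CNax(r)\cap CNin(r)\neq\emptyset$ from the non-emptiness of the overlap of the two \emph{value intervals}, silently assuming that some correct node actually takes a value in that overlap. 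So your diagnosis of where the difficulty sits is accurate, but the honest conclusion is that this direction cannot be closed without changing the statement (for instance, replacing the second disjunct by the numeric condition $v_{max}(r)-v_{min}(r)<\epsilon$, or redefining the $CNin$/$CNax$ intervals to be half-open and fixing $\delta=\epsilon/2$), rather than by more careful bookkeeping.
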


\begin{proof}
By definition, if convergence is already reached during a round $r$ (that belongs or not to $S^{c}$), either all the values of the correct nodes are equal or they differ by at most $\epsilon$. In the first case, $CMin(r) = CMax(r)$ and $CNax(r) = CNin(r) = \emptyset$. In the second case, $CMin(r) \neq CMax(r)$ and $v_{max}(r) - v_{min}(r) < \epsilon$. As $ \delta \leq \frac{\epsilon}{2}$, we have $v_{max}(r)-\delta < v_{min}(r)+\delta$. Thus $CNax(r) \cap CNin(r) \neq \emptyset$. The first implication holds.
To prove the second implication, let us first consider that $CMin(r) = CMax(r)$. Due to Corollary 1, all the correct nodes will keep the common value in the future. Now if $CNax(r) \cap CNin(r) \neq \emptyset$, there exist at least one correct node $p_{i}$ who has proposed a value $v_i(r)$ which belongs both to $(v_{min}(r),v_{min}(r)+\delta)$ and $(v_{max}(r)-\delta, v_{max}(r))$.  Thus $v_{max}(r)-\delta < v_{min}(r)+\delta$.  By assumption, $\delta$ belongs to $(0,\frac{\epsilon}{2}]$. Therefore, $v_{max}(r) - v_{min}(r) < 2 \delta \leq \epsilon$. Again, due to Corollary 1, convergence is already reached when round $r$ begins.
\end{proof}

\begin{lemma}
\label{lemmaz}
Let $r \in S^{c}$ be a common new starting round such that convergence is not yet reached when $r$ starts. Let $\omega_{1}$  be a positive integer ($\omega_{1} \geq 1$). Let $\varsigma_1$ and  $\varsigma_2$ be two reals that belong to $(0, 1]$ and such that:\\
$v_{min}(r+ \omega_{1} R_{c}) = \varsigma_{1} v_{min}(r)$\\
$v_{max}(r+ \omega_{1} R_{c}) = \varsigma_{2} v_{max}(r)$\\
During the $\omega_{1}$ phases, if neither the minimal value increases ($\varsigma_1 = 1$) nor the maximal value decreases ( $\varsigma_2 = 1$) then the two following predicates are both satisfied:
\begin{enumerate}
\item $(\mid CMin(r + \omega_{1} R_{c}) \mid < \mid CMin(r) \mid)$ or  $(\mid CMax(r + \omega_{1} R_{c}) \mid < \mid CMax(r) \mid)$
\item $\omega_{1} < n$ 
\end{enumerate}
\end{lemma}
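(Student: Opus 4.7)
The plan is to combine the condition in Theorem 2 with the safety guarantees of Corollary 1 and the strict behaviour of the averaging procedure to argue that each of the $\omega_1$ phases permanently removes at least one correct node from $CMin \cup CMax$; both conclusions then follow from a counting argument.

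First I exploit $\varsigma_1 = \varsigma_2 = 1$. Corollary 1 yields the monotone chain $v_{min}(r) \le v_{min}(r+R_c) \le \cdots \le v_{min}(r+\omega_1 R_c) = v_{min}(r)$, and symmetrically for $v_{max}$, so $v_{min}$ and $v_{max}$ are constant at every common new starting round $r + kR_c$, $0 \le k \le \omega_1$. Consequently the five value intervals used to classify correct nodes do not change over the $\omega_1$ phases. Because the gap $v_{max}(r)-v_{min}(r) \ge \epsilon$ (convergence is not yet reached at $r$) is also preserved, convergence is not reached at any of these phase starts either.

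The heart of the proof is the per-phase decrement. Applying Theorem 2 to the phase starting at $r + kR_c$ gives a correct node $p_i \in CMin(r+kR_c) \cup CMax(r+kR_c)$ and a round $r'$ in that phase at which $p_i$'s joint neighbour set contains at least $f+1$ proper values. I handle the case $p_i \in CMin$; the other is symmetric. If $v_i(r') > v_{min}$ already, the averaging rule $v_{new} = (v_i + \sum v_j)/(n_i+1)$ applied to $v_i > v_{min}$ together with legal values $v_j \ge v_{min}$ (Lemma 1) always produces something strictly above $v_{min}$, so $p_i$ stays out of $CMin$ thereafter. Otherwise $v_i(r') = v_{min}$, every proper value strictly exceeds $v_i(r')$ and contributes to the count $x$ of line 12; hence $x \ge f+1$ and $p_i$ does update. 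The reducing procedure retains at least one proper value in both sub-cases: when $x>y$ only $B$ (the $f$ largest) and $S \cap \{v<v_i\}$ are removed, and since $|B|=f$ while there are $f+1$ proper values, at least one proper value escapes $B$, and being $> v_i$ it cannot lie in $S \cap \{v<v_i\}$; when $x \le y$ we have $y \ge f+1$, so the bottom-$f$ set $S$ lies entirely in $\{v \le v_i\}$ and contains no proper value, while only elements of $B$ strictly above $v_i$ are removed, again leaving at least one of the $f+1$ proper values untouched. Averaging the surviving proper value (strictly above $v_{min}$) with $v_i = v_{min}$ and other legal values $\ge v_{min}$ forces $v_i(r'+1) > v_{min}$. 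The same strict-averaging observation then shows $p_i$ cannot re-enter $CMin$, so $CMin(r+(k+1)R_c) \subseteq CMin(r+kR_c)$ and similarly for $CMax$.

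The counting argument concludes both parts. Summing the per-phase decrement over the $\omega_1$ phases yields
\[
|CMin(r+\omega_1 R_c)| + |CMax(r+\omega_1 R_c)| \;\le\; |CMin(r)| + |CMax(r)| - \omega_1 .
\]
Since $\omega_1 \ge 1$, strict inequality holds in at least one coordinate, which is claim 1. For claim 2, convergence not being reached at $r + \omega_1 R_c$ forces $CMin(r+\omega_1 R_c)$ and $CMax(r+\omega_1 R_c)$ to be disjoint and both non-empty, so their cardinalities sum to at least $2$; combined with the trivial bound $|CMin(r)| + |CMax(r)| \le n$ this gives $\omega_1 \le n - 2 < n$. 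The main technical obstacle is the reducing-procedure case analysis showing that a proper value always survives both sub-cases, since the $x>y$ and $x \le y$ branches need separate attention; everything else reduces to invoking safety and bookkeeping on monotonically shrinking sets.
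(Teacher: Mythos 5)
Your proof is correct and takes essentially the same route as the paper's: a per-phase decrement of $\mid CMin\mid + \mid CMax\mid$ obtained from the Theorem~2 condition plus the strict-averaging observation, followed by a finiteness/counting argument. You actually spell out two steps the paper leaves implicit (the case analysis showing a proper value survives both branches of the reducing procedure, and the monotonicity $CMin(r+(k+1)R_c)\subseteq CMin(r+kR_c)$ needed for the counting to go through), but the underlying argument is the same.
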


\begin{proof}
Due to Corollary 1, it is obvious that we can rewrite $v_{min}(r+R_{c})$ and $v_{max}(r+R_{c})$ using the defined $\varsigma_1$ and $\varsigma_2$. Let us assume that the minimal and the maximal values are  stable during the $\omega_{1} R_{c}$ rounds: $\varsigma_1 = 1$ and  $\varsigma_2 = 1$. We demonstrate that, after each phase, the cardinality of at least one of the two sets decrease. The fact that the property holds when $\omega_{1} = 1$ allows us to conclude that the property holds for any value of  $\omega_{1}$. During the $R_{c}$ numbered $r$, $r+1$, $\ldots$, $r+R_{c}-1$, due to the necessary condition expressed in Theorem 2, there exists at least one round during which a "good" phenomena occurs. Let us consider the highest round $r'$ during which the condition is true and let $p_{i}$ be a node such that $p_{i}$ has gathered enough proper values:
$\exists V_{q}\subseteq JN_{i}^{r'}$ such that $\mid V_{q}\mid\geq f+1$ and $\forall p_{j}\in V_{q}$, $v_{j}(r')$ is a proper value. Without loss of generality, let us assume that $p_{i}$ belongs to $CMin(r)$. Due to Corollary 1, $v_{i}(r') \geq v_{min}(r)$. Moreover, due to the condition, the reducing procedure and the average procedure are executed by node $p_{i}$. During the reducing procedure, at least one value greater or equal to $v_{min}(r)+\delta$ is not removed. In the worst case, all the other values used during the computation are equal to $v_{min}(r)$. Even in that case, the new computed value of $p_{i}$ is such that $v_{i}(r'+1) > v_{min}(r)$. If $r'$ is not the last round of the phase (and despite the fact that r' is the highest round of the phase during which the condition is true), it could be the case that $p_{i}$ computes again its new value during rounds $r''$ such that $r' < r'' < r+R_{c}$. In the worst case, $p_{i}$ will compute the average between its own value ($> v_{min}(r)$) and a set of gathered values all equal to $v_{min}(r)$. Thus, when the next phase begins, $v_{i}(r+R_{c}) > v_{min}(r)$. Consequently, if $v_{min}(r+R_{c}) = v_{min}(r)$, at least one node (namely $p_{i}$) belongs to $CMin(r)$ but not to $CMin(r+R_{c})$. A similar reasoning can be adopted if $p_{i}$ belongs to $CMax(r)$. As the number of nodes is finite, and as at least one node per phase is removed from either $CMin$ or $CMax$, we can conclude that $v_{min}$ and $v_{max}$ both remain stable during at most $n-1$ phases if convergence was not yet reached during round $r$. It is always the case that $v_{min}(r+ n R_{c}) > v_{min}(r)$ or $v_{max}(r+ n R_{c}) < v_{max}(r)$. Thus the second predicate $\omega_{1} < n$ also holds.

\end{proof}

We prove now Theorem 2.

\begin{proof}
(\emph{Theorem 2}) 

Let $r$ be a starting round during which convergence is not yet achieved.
From Lemma~\ref{lemmaz}, we can conclude that there exist a positive integer $\omega_{1}$ such that either $v_{min}(r+ \omega_{1} R_{c}) = \varsigma_{1} v_{min}(r)$ with $\varsigma_1 \in (0, 1)$ or $v_{max}(r+ \omega_{1} R_{c}) = \varsigma_{2} v_{max}(r)$ with $\varsigma_1 \in (0, 1)$. Let $d = v_{max}(r)-v_{min}(r)$. In the first case we have:\\
$d > v_{max}(r+\omega_{1} R_{c})-v_{min}(r)$.\\
In the second case, \\
$d >v_{max}(r)-v_{min}(r+\omega_{1} R_{c})$.
Due to Corollary 1, $v_{min}(r) \leq v_{min}(r+\omega_{1} R_{c})$ and $v_{max}(r) \geq v_{max}(r+ \omega_{1} R_{c})$. Therefore, in both cases:
$d > v_{max}(r+\omega_{1} R_{c}) - v_{min}(r+\omega_{1} R_{c})$.

The difference $v_{max} - v_{min}$ will always decrease. Yet this is not sufficient to prove that eventually convergence is reached. To prove this last point, we have to show the existence of a finite integer $\tau$, such that: $v_{max}(\tau R_{c}+1)-v_{min}(\tau R_{c}+1)<\epsilon$.

The proof is by contradiction. Suppose that the above condition is never satisfied. In other words, whatever the value of the integer $\tau$, the difference $v_{max}(\tau R_{c}+1)-v_{min}(\tau R_{c}+1)$ only approaches a value $\mu$ but $\mu \geq \epsilon$. Figure~\ref{Fig5} depicts such a scenario. In this representation we assume that there exists always a real $\Delta$ such that:
$\Delta \geq 0$ and $v_{max}(\tau R_{c}+1)-v_{min}(\tau R_{c}+1) = \Delta + \mu$.

In Figure~\ref{Fig5}, $\Delta$ is represented by the sum of  $\Delta_{1}$ and $\Delta_{2}$.

\begin{figure}[ht]
    \centering
    \includegraphics[width=2.5in]{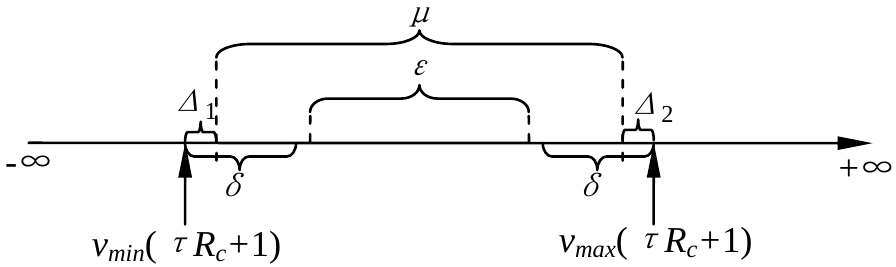}
    \caption{Example of value $\mu$}
    \label{Fig5}
\end{figure}

Due to Lemma~\ref{lemmaz}, when $\tau$ approaches to infinity, $\Delta$ approaches to 0. We will show that $\mu$ is not a limit: the  value of $\Delta$ may become less than zero and never become positive again. First we will show that there exists a particular positive value of $\Delta$ such that after a single execution of the average procedure during a round $r$, the value of $\Delta$ becomes negative. Then we show that there exists a particular positive value of $\Delta$ such that after a finite number of execution of the procedure average, the value of $\Delta$  remains negative forever.

First let us consider a particular phase $k$. Let  $r = (k-1) R_{c} + 1$ be the first round of this phase. Let $d$ denotes the difference $v_{max}(r) - v_{min}(r)$. Let us assume that at the beginning of phase $k$, the value of $\Delta$ is positive and equal to $d - \mu$. Due to Lemma~\ref{lemmaz}, there exists a round $r'$ such that: 
 $CMin(r) = CMin(r')$ and $CMax(r) =CMax(r')$ and $(CMin(r) \neq CMin(r'+1)$ or $CMax(r) \neq CMax(r'+1))$. 
 
 Let $d'$ be the difference $v_{max}(r'+1)-v_{min}(r'+1)$. We have $d > d'$. The value of $\Delta$ decreases by $d-d'$ between round $r$ and round $r'+1$. The value of $\Delta$ is equal to $d' - \mu$ during round $r'+1$. This value is negative if $d' < \mu$.  First we compute an estimation of the minimal value $d-d'$  that can be observed. Obviously, to ensure that the difference $d-d'$ is as small as possible, either just the minimal value has to increase or just the maximal value has to decrease (but not both during the same round $r'$). As the two cases are symmetric, let us consider that the minimal value increases while the maximal one remains stable. Let us consider a node $p_{i}$ such that $p_{i}$ has the minimal value during round $r'$. This node may have again the minimal value during round $r'+1$. In that case,  $v_{min}(r'+1) > v_{min}(r')$ and $d-d' = v_{min}(r'+1) - v_{min}(r')$. Our aim is to obtained an estimation (more precisely an under-estimation) of the difference $d-d'$.  To be allowed to compute a new value, the node $p_{i}$ must gather at least $f+1$ proper values (in the worst case, these value can be equal to $v_{min}(r') + \delta$). After the reducing procedure, $p_{i}$ keeps at most $n-f$ values. Furthermore at least one correct node propose a value greater than $\mu$ (otherwise this contradict the fact that the limit $\mu$ is respected between round $r$ and $r'$). Yet as our goal is just to provide an under-estimation, we consider an (unrealistic) worst case. The new value of node $p_{i}$ computed during round $r'$ during the execution of the average procedure is the average between $n$ values where $n-1$ are equal to $v_{min}(r')$ and a single one is a proper value (more precisely, the minimal proper value, namely $v_{min}(r') + \delta$). In that case, we have:\\
$v_{min}(r'+1) > \frac{(n-1)v_{min}(r')+(v_{min}(r')+\delta)}{n}$\\
As the right part of the above formula is an under-estimation, we use the symbol "$>$" rather than the symbol "$\geq$". Based on the previous formula, we conclude that:\\
$d-d' > v_{min}(r') + \frac{\delta}{n} - v_{min}(r')$

We have proved the existence of a particular positive value of $\Delta$ (during round $r'$) that is small enough to imply that the value of $\Delta$ can be negative during the next round.
If during round $r'$, the value of $\Delta$ (which is equal to $d - \mu$) is strictly less than $\frac{\delta}{n}$, then the value of $\Delta$ can be negative during round $r'+1$. This contradict the fact that $\mu$ is a limit for $v_{max}-v_{min}$ which is never 
violated.

At this stage, we have just demonstrated that $\mu$ is not a limit. We now show that after some time, this limit will be breached permanently. To understand why a violation of the limit $\mu$ is sometimes transient, let us consider that during a phase a single node $p_{i}$ has the smallest value $v_{min}(r)$. During the first round of this phase, it may broadcast this value to all the other nodes and then compute a new value $v_{min}(r')$ which violates the limit $\mu$. Unfortunately any other node may compute again a new value based on its own value and old values contained in their log that are less that $v_{min}(r')$ and possibly very closed from $v_{min}(r)$. As a consequence the value of $p_{i}$ may decrease again and respect again the limit $\mu$.

Let us consider a common new starting round $r$ such that the limit $\mu$ is respected. Due to Lemma~\ref{lemmaz}, after $n-1$ phases, either all the nodes that have the minimal value or all 
the nodes that have the maximal value during round $r$ have now adopted either an higher value or respectively a smaller value. Again without loss of generality, let us consider the worst case where all the nodes (except one)  where sharing the minimal value $v_{min}(r)$ during round $r$ while a single node has a value equal to $v_{min}(r)+\mu$. Again, our goal is to identify a limit $x$ (even if this one is under-estimated) that shows that at a beginning of round $r+ n R_{c}+1$, no value less than $v_{min}(r) + x$ remains in the system. Again, during each phase, at least one node $p_{i}$ ,which has the minimal value when the phase begins modifies its value and adopts during a round of the phase, a value which is at least equal to $v_{min}(r)+ \frac{\delta}{n}$ (See the above discussion). In the worst case, this change occurs during the first round of the first phase denoted $r$. Then $p_{i}$ may compute again its value during the next $n R_{c} - 1$ following rounds. If it receives only values that are equal to $v_{min}$ during these rounds, its value at the end of the phase is strictly greater than $v_{min}(r')+\frac{\delta}{n^{n R_{c}}}$. Once this last phase ends, the value of $p_{i}$ can no more decrease. Therefore, after at most $n$ phases, all the correct nodes have a value that will remain greater than $v_{min}(r)+\frac{\delta}{n^{n R_{c}}}$. Consequently there exits a positive value of $\Delta$ such that the violation of the limit $\mu$ is permanent. 

The condition is sufficient to ensure convergence.
\end{proof}

Regarding the fact that the condition is necessary, we identify a weaker condition. Indeed, the condition does not have to be satisfied in each phase but only infinitely often. This modification of the condition has no major impact on the way we prove that the condition is a sufficient condition. Some properties are no more satisfied "at the end of each phase" but "after a finite number of phases".

To prove that the condition is necessary, we show that the quantity constraint and the quality constraint are both necessary. Within the set of $n$ nodes, let us assume that half of the $n-f > 2f$ correct nodes share a same value $v_{min}$ while the second half share the value $v_{max}$. We assume that $v_{min}$ and $v_{max}$ are such that the convergence is not yet reached. If a correct node gathers only $f$ proper values before computing its new value, it could be the case that the values that remain after the execution of the reducing procedure are all equal to its own value. Thus no correct node will change its value. If a correct node gathers $f+1$ values but at least one of them is not a proper value, it is also possible that all the proper values will be removed during the reducing procedure. Again, the values of the correct nodes will be stable and convergence is not ensured.

\section{Conclusion}
\label{sec:conclusion}

In this paper, we addressed approximate Byzantine consensus problem in partially connected mobile networks. An architecture for both moving and consensus protocol has been proposed. Then an approximate consensus protocol based on a linear iteration method has been presented. In order to take advantage of mobility, in this protocol, nodes are allowed to collect messages during at most $R_{c}$ consecutive rounds. Afterwards, we have defined a sufficient and necessary condition that allows to satisfy \emph{convergence}. Compared to existing papers, this novel condition is dynamic and not "universal". It only focuses on the correct nodes which propose the maximum or the minimum value and requires that, from time to time, at least one of them should receive enough (quantity constraint) proper (quality constraint) values. Our analysis shows that if $n\geq3f+1$, the condition has chances to be satisfied and consensus can be reached. We are now working on particular mobility scenarios where either the existence of some meeting points or a predefined trajectory and scheduling  allow to prove that the condition is satisfied. Simulations are also conducted to analyze the impact of a tuning of the $R_c$ parameter.

{\scriptsize
\section*{Acknowledgment}
This work is partially supported by Natural Science Foundation, China under grant 60973122 and National 863 Hi-Tech Program, China under grant 2011AA040502. This work is partially supported by the ANR French national program for Security and
Informatics (grant \#ANR-11-INSE-010, project AMORES).
}

\bibliographystyle{unsrt}
{\scriptsize
\bibliography{IEEEabrv,disctemplate}
}

%%
%% END OF THE PAPER
%%

\tableofcontents
\end{document}